\newif\ifacm
\newcommand{\axeslinlog}[6]{
	
	\pgfmathsetmacro{\offs}{0.5};
	\pgfmathsetmacro{\scale}{0.7};
	\draw [black,->] (#1,#3) -- (#2+\offs,#3);
	\draw [black,->] (#1,#3) -- (#1,#4+\offs);
	
	\foreach \i in {#1,...,#2} {
		\draw (\i,#3-0.2) -- (\i,#3+0.2);
		\pgfmathsetmacro{\v}{int(\i * #5)};
		\node[black,scale=\scale] at (\i,#3-0.5) {$\v$};
	}
	
	\foreach \j in {#3,...,#4} {
		\draw (#1-0.2,\j) -- (#1+0.2,\j);
		\pgfmathsetmacro{\v}{int(\j + #6 - 1)};
		\node[black,scale=\scale] at (#1-0.7,\j) {$10^{\v}$};
	}
}
\definecolor{cornflowerblue}{rgb}{0.39, 0.58, 0.93}
\definecolor{brickred}{rgb}{0.8, 0.25, 0.33}
\definecolor{ao}{rgb}{0.0, 0.5, 0.0}
\definecolor{amber}{rgb}{1.0, 0.75, 0.0}
\begin{document}

\title{Efficient Deobfuscation of Linear Mixed Boolean-Arithmetic Expressions
	\ifacm\else
	\footnote{Accepted for presentation at the \href{https://www.cct.lsu.edu/\~checkmate/}{CheckMATE} workshop on research of offensive and defensive techniques in the context of Man-At-The-End (MATE) attacks, co-located with \href{https://www.sigsac.org/ccs/CCS2022/}{ACM Conference on Computer and Communications Security (CCS) 2022}, November 7 - 11, 2022.}
	\fi
}

\ifacm
	\author{Benjamin Reichenwallner}
		\email{benjamin.reichenwallner@denuvo.com}
		\affiliation{%
			\institution{Denuvo GmbH}
			\streetaddress{Strubergasse 26}
			\city{Salzburg}
			\country{Austria}
			\postcode{5020}
		}
	
	\author{Peter Meerwald-Stadler}
	\email{peter.meerwald@denuvo.com}
	\affiliation{%
	  \institution{Denuvo GmbH}
	  \streetaddress{Strubergasse 26}
	  \city{Salzburg}
	  \country{Austria}
	  \postcode{5020}
	}
	
	\renewcommand{\shortauthors}{Reichenwallner \& Meerwald}
\else
	\author{Benjamin Reichenwallner \& Peter Meerwald-Stadler \\ Denuvo GmbH \\ Salzburg, Austria}
	\date{}
\fi

\newcommand{\NN}{\mathbb{N}}
\newcommand{\ZZ}{\mathbb{Z}}
\newcommand\xor{\mathbin{\char`\^}}
\renewcommand\xor{\mathbin{^\wedge}}

\newcommand{\uand}{\mathbin{\&}}
\newcommand{\uor}{\mathbin{|}}
\newcommand\uxor{\xor}
\newcommand\unot{\mathord{\sim}}

\ifacm\else
	\makeatletter
	\def\thm@space@setup{%
		\thm@preskip=.7cm
		\thm@postskip=\thm@preskip 
	}
	\makeatother
	
	\newtheorem{definition}{Definition}
	\newtheorem{theorem}{Theorem}
	\newtheorem{corollary}{Corollary}
\fi

\newcommand{\abstractContent}{
	Mixed Boolean-Arithmetic (MBA) expressions are frequently used for obfuscation. As they combine arithmetic as well as Boolean operations, neither arithmetic laws nor transformation rules for logical formulas can be applied to suitably complex expressions, making MBAs hard to simplify and solve.
	
	In 2019, Liu et al. demystified linear MBAs, leveraging a transformation between the set $B=\{0,1\}$ of bit values and the set $B^n$ of words of length $n\in\mathbb{N}$ for linear MBAs, originally introduced by Zhou et al. in 2007. With their \textit{MBA-Blast} and \textit{MBA-Solver} algorithms, they outperform existing tools noticably in terms of performance as well as ability to simplify of such MBAs.
	
	We propose a surprisingly simple algorithm called \textit{SiMBA} that improves upon MBA-Blast and MBA-Solver in that it can deobfuscate all linear MBAs, does not miss particularly simple solutions and takes only a fraction of their runtime.
}

\ifacm
	\begin{abstract}
		\abstractContent
	\end{abstract}

	\begin{CCSXML}
		<ccs2012>
		<concept>
		<concept_id>10011007.10011074.10011099.10011692</concept_id>
		<concept_desc>Software and its engineering~Formal software verification</concept_desc>
		<concept_significance>300</concept_significance>
		</concept>
		<concept>
		<concept_id>10002978.10003022.10003023</concept_id>
		<concept_desc>Security and privacy~Software security engineering</concept_desc>
		<concept_significance>300</concept_significance>
		</concept>
		<concept>
		<concept_id>10003752.10003790.10003798</concept_id>
		<concept_desc>Theory of computation~Equational logic and rewriting</concept_desc>
		<concept_significance>300</concept_significance>
		</concept>
		</ccs2012>
	\end{CCSXML}
	
	\ccsdesc[300]{Software and its engineering~Formal software verification}
	\ccsdesc[300]{Security and privacy~Software security engineering}
	\ccsdesc[300]{Theory of computation~Equational logic and rewriting}
	
	\keywords{obfuscation, deobfuscation, mixed Boolean-arithmetic expressions, simplification, software protection}
\fi

\maketitle

\ifacm\else
	\begin{abstract}
		\abstractContent
	\end{abstract}
\fi

\section{Introduction}

Mixed Boolean-arithmetic (MBA) transformation is a currently popular technique for code obfuscation introduced in the year 2006 by Zhou et al.~\cite{zhou1, zhou}. Simple expressions such as constants are replaced by semantically equivalent mixed Boolean-arithmetic expressions in order to make (commonly binary) code harder understandable and thereby help hide secret information such as data and algorithms --- e.g., used for watermarking or license checks --- via introduction of exaggerated complexity. The goal is to rewrite passages which are easily identifiable in program code as well as in binaries into expressions which are not easily referrable to the original ones.

MBA expressions contain logical as well as arithmetic operations. Due to a bad interaction between those, they cannot be resolved straightforward using any established SAT solvers or mathematical tools which either concentrate on logical or on arithmetic expressions. However, with the rise of methods to generate MBAs, a variety of tools for their deobfuscation, i.e., simplification, or verification are being developed. They use various techniques such as pattern matching (e.g., \mbox{SSPAM~\cite{sspam}}), neural networks (e.g., NeuReduce~\cite{neureduce}), bit-blasting (e.g., Arybo~\cite{arybo}), stochastic program synthesis (e.g., Stoke~\cite{stoke}, Syntia~\cite{syntia} and Xyntia~\cite{xyntia}) or synthesis-based expression simplification (e.g., QSynth~\cite{qsynth} and msynth~\cite{msynth}).

Sufficiently complex MBAs are commonly generated by a rewriting technique, iteratively replacing subexpressions by equivalent, more complex ones using a codebook of MBA identities~\cite{loki}. Obviously, such a codebook can also be used for an attempt to simplify complex MBAs. Since in general MBA expressions cannot be expected to appear in a codebook right away, an SMT solver can be used for equivalence checks against the listed simpler MBAs. MBAs can further be hardened against deobfuscation by applying additional encoding, invertible functions or point functions~\cite{zhou, loki}. This may introduce large constants which pose problems to many deobfuscation tools.

Recently, a first algorithm for fully algebraic attacks has been developed, incorporated in the highly related tools \textit{MBA-Blast}~\cite{mba-blast} and \textit{MBA-Solver}~\cite{mba-solver}. These make large progress to the deobfuscation of especially so-called \textit{linear MBAs} as well as of \textit{polynomial MBAs} with small restrictions. As written in their related papers and supported by our own experiments, these tools outperform other existing tools significantly for these classes of MBAs when it comes to simplification success and runtime. They are based on an alternative method for generating MBAs relying on a transformation of linear MBAs between the $n$-bit space for any $n\in \NN$ and the $1$-bit space (in other words, between bitwise and logical expressions) first described by Zhou et al.~\cite{zhou} and basically reverse the process for deobfuscation.

Motivated by their approach, we contribute a related, but different  algorithm which we call \textit{SiMBA} (for \textit{Simple MBA Simplifier}) and which is characterized by its simplicity, its competitive performance, its genericness derived through an implementation which is independent of the number of variables and the MBAs' complexity as well as the detail that it avoids a full transformation to the $1$-bit space, hence does not require a decomposition of an input MBA into terms consisting of bitwise expressions and constant factors. Therefore it allows more flexibility concerning the structure of its input. Its implementation as well as our self-generated datasets for its evaluation are publicly available on Github~\cite{github}. Moreover, we give a step-by-step deduction of the fundamental theorems the mentioned peer tools are based on.

In the next section we provide a definition of linear and polynomial MBAs. Additionally we describe how MBAs equivalent to a specific target function, that may itself be a linear MBA, can be generated. On the one hand this is relevant for the generation of MBAs for experiments and on the other hand the theorems presented there also pave the way for our deobfuscation technique that is also used in variations by peer tools. We revisit the groundbreaking theorem of Zhou et al.~\cite{zhou} and provide clear verifications for statements used for the generation of linear MBAs which are aimed at substitute the zero constant, a bitwise expression or even another linear MBA as well as for their deobfuscation.

The most comparable peer tools \textit{MBA-Blast} and \textit{MBA-Solver} which provide efficient deobfuscation algorithms for polynomial MBAs are described in the following section. This section additionally gives an overview of our SiMBA algorithm and its differences to MBA-Blast and MBA-Solver, and provides a proof of its formal foundation. Additionally, we provide results of experiments comparing SiMBA to these tools.

Finally, we point out our main contributions and give an outlook on the simplification of nonlinear MBAs which may use our approach too after identification of its linear parts.

\section{Preliminaries}

\subsection{Linear Mixed Boolean-Arithmetic Expressions}

Mixed Boolean-arithmetic expressions mix Boolean expressions with arithmetic ones. As there is obviously a strong connection between logical and bitwise operations, the concept of Boolean expressions is often intermixed with that of bitwise ones. While logical operators basically operate on $B=\{0,1\}$, i.e., single bits, bitwise operations are equivalently applied to all bits of $n$-bit words in $B^n$ for any fixed $n \in \NN$. This link is crucial for recent findings about MBAs, e.g., a generation algorithm proposed by Zhou et al.~\cite{zhou} and the highly related deobfuscation tools MBA-Blast~\cite{mba-blast} and MBA-Solver~\cite{mba-solver}.

We prefer the notion of bitwise expressions which fits better to our context. Hence we use the operators $\&$ (bitwise conjunction), $\xor$ (bitwise exclusive disjunction), $\vert$ (bitwise inclusive disjunction) and $\sim$ (bitwise negation) rather than $\land$ (logical conjunction), $\oplus$ (logical exclusive disjunction), $\lor$ (logical inclusive disjunction) and $\lnot$ (logical negation).

Although we are mainly interested in linear MBAs, we state the  --- more general --- definition of a polynomial MBA first.

\begin{definition}\label{def:poly}
	Let $B = \{0,1\}$ and $n,t \in \NN$. A \textit{polynomial mixed Boolean-arithmetic expression (MBA)} with values in $B^n$ and $t$ variables is a function $e: \left(B^n\right)^t \to B^n$ of the form $$e\left(x_1,\ldots,x_t\right) = \sum_{i\in I} a_i \prod_{j\in J_i} e_{ij}\left(x_1,\ldots,x_t\right),$$ where $I \subset \NN$ and $J_i\subset \NN$, for $i\in I$, are index sets, $a_i \in B^n$ are constants and $e_{ij}$ are bitwise expressions of $x_1,\ldots,x_t$ for $j\in I_j$ and $i \in I$.
\end{definition}

A linear MBA is a special kind of a polynomial one where each term consists of only one bitwise expression and an optional constant factor.

\begin{definition}\label{def:linear}
	Let $B = \{0,1\}$ and $n,t \in \NN$. A \textit{linear mixed Boolean-arithmetic expression (MBA)} with values in $B^n$ and $t$ variables is a function $e: \left(B^n\right)^t \to B^n$ of the form $$e\left(x_1,\ldots,x_t\right) = \sum_{i\in I} a_i e_i\left(x_1,\ldots,x_t\right),$$ where $I \subset \NN$ is an index set, $a_i \in B^n$ are constants and $e_i$ are bitwise expressions of $x_1,\ldots,x_t$ for $i \in I$.
\end{definition}

These definitions' origin lies in the groundbreaking paper of Zhou et al.~\cite{zhou} in which MBAs are proposed as a promising technique for obfuscation since logical and arithmetic operators do not interact nicely enough for MBAs to be easily resolvable and not much theory on MBAs existed at that time. In that paper, they also propose a much noticed method for generating linear MBAs. It relies on a fundamental relation between Boolean and bitwise expressions.

\begin{theorem}\label{thm:zhou}
	Let $n, s, t \in \NN$, $x_i$ variables over $B^n$ for $i=1,\ldots,t$ and $e_j: \left(B^n\right)^t \to B^n$ bitwise expressions on these variables for $j=1, \ldots, s$. Let $$e\left(x_1,\ldots,x_t\right) = \sum_{j=1}^s a_j e_j\left(x_1,\ldots,x_t\right)$$ be a linear combination of these bitwise expressions with coefficients $a_j \in B^n$ for $j=1,\ldots,s$ and hence a linear MBA. Furthermore let, again for $j=1,\ldots,s$, $\overline{e}_j: B^t \to B$ be the logical expression corresponding to $e_j$. Enumerate the possible combinations of zeros and ones for the variables by $B_t = \{b_1, \ldots, b_{2^t}\}$ arbitrarily, but fixed, and let $A = \left(v_{ij}\right) \in B^{2^t \times s}$ be the matrix of truth values of the $\overline{e}_j$'s with $v_{ij} = \overline{e}_j(b_i)$.
	
	Then $e \equiv 0$ if and only if the vector $Y=Y_a = (a_1,\ldots,a_s)^T$ is a solution of the linear equation system $AY = o$ over $B^n$, where $o = (0,\ldots,0)^T$ is the zero vector in $B^{2^t}$.
\end{theorem}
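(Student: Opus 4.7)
The plan is to exploit the fundamental property that bitwise expressions act independently on each bit position, and then to separate the proof of the equivalence into the two directions.

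First, I would introduce notation for the bit decomposition of an input. For any $x = (x_1, \ldots, x_t) \in (B^n)^t$ and any bit position $k \in \{0, \ldots, n-1\}$, let $b_{i_k}$ denote the element of $B_t$ formed by the $k$-th bits of $x_1, \ldots, x_t$. The crucial observation, following directly from the definition of a bitwise expression and the correspondence $\overline{e}_j \leftrightarrow e_j$, is that the $k$-th bit of $e_j(x)$ equals $\overline{e}_j(b_{i_k}) = v_{i_k, j}$. Summing the bits weighted by their positional values gives
\[
e_j(x) = \sum_{k=0}^{n-1} v_{i_k, j}\, 2^k.
\]
Substituting into the linear combination yields the central identity
\[
e(x) \;=\; \sum_{j=1}^s a_j \sum_{k=0}^{n-1} v_{i_k,j}\, 2^k \;=\; \sum_{k=0}^{n-1} 2^k \sum_{j=1}^s v_{i_k,j}\, a_j \;=\; \sum_{k=0}^{n-1} 2^k (AY)_{i_k},
\]
where $(AY)_i$ denotes the $i$-th entry of the product vector $AY \in (B^n)^{2^t}$.

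With this identity in hand, the two implications become almost immediate. For the "if" direction, assuming $AY = o$, every $(AY)_{i_k}$ equals $0 \in B^n$, so the sum above vanishes for every $x$, proving $e \equiv 0$. For the "only if" direction, I would construct, for each fixed index $i \in \{1, \ldots, 2^t\}$, a specific input whose bit pattern is $b_i$ at every bit position: writing $b_i = (b_{i,1}, \ldots, b_{i,t})$, set $x_j := b_{i,j} \cdot (2^n - 1)$. For this input one has $i_k = i$ for all $k$, and the identity collapses to
\[
e(x) \;=\; \left(\sum_{k=0}^{n-1} 2^k\right) (AY)_i \;=\; (2^n - 1)(AY)_i \;\equiv\; -(AY)_i \pmod{2^n}.
\]
Since $e \equiv 0$ forces this to be zero in $B^n$, we conclude $(AY)_i = 0$, and letting $i$ range over all of $\{1, \ldots, 2^t\}$ gives $AY = o$.

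The main obstacle I anticipate is not the algebra itself but the careful justification of the bit-level identity $e_j(x) = \sum_k v_{i_k,j}\, 2^k$: one must make precise that a bitwise expression, by construction, commutes with projection onto any single bit, and that the correspondence between $e_j$ and $\overline{e}_j$ is exactly what realizes this commutation. Once this is cleanly established, the remainder of the argument reduces to arithmetic in $\mathbb{Z}/2^n\mathbb{Z}$ together with the observation that $(2^n-1)$ is a unit (in fact, $-1$) modulo $2^n$, which allows the "only if" direction to conclude from a single carefully chosen input per row of $A$.
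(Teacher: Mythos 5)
Your proof is correct and complete. Note that the paper itself does not prove Theorem~\ref{thm:zhou}: it only states it (with the corrected hypothesis that $Y_a$ must actually be a solution, not merely that the system be solvable) and defers to Zhou et al.~\cite{zhou} and the MBA-Blast paper~\cite{mba-blast}, so there is no in-paper proof to compare against line by line. That said, your argument is exactly the technique the paper deploys elsewhere: the central identity $e(x)=\sum_{k=0}^{n-1}2^k (AY)_{i_k}$, obtained by decomposing each word into bit positions and using that a bitwise expression commutes with projection onto any single bit, is the same $2$-adic decomposition that appears in the proof of Theorem~\ref{thm:main} (there in the form of the $\sum_{k=2}^{n}2^{k-1}$ correction terms), and your choice of test inputs $x_j=b_{i,j}(2^n-1)$, which makes every bit position realize the same pattern $b_i$ and produces the unit factor $2^n-1\equiv-1 \pmod{2^n}$, mirrors the paper's evaluation at the all-zeros assignment to extract $\sum_j a_j\overline{e}_j(o)$. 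The one point you rightly flag --- that the $k$-th bit of $e_j(x)$ equals $\overline{e}_j$ applied to the tuple of $k$-th bits of the inputs --- is precisely the definition of a bitwise expression and of the correspondence $e_j\leftrightarrow\overline{e}_j$ assumed throughout the paper, so no gap remains; your proof would serve as a valid self-contained justification of the theorem as stated.
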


Note that the theorem's original statement only requires that the linear equation system is solvable, but this does not suffice in general. The concrete vector is required to be a solution. As we will see, this theorem is very crucial for the generation of linear MBAs as well as for their deobfuscation, but the latter purpose remained unnoticed for a long time, possibly due to its incorrect formulation.

Apart from that, Zhou et al.~\cite{zhou} also show that each bitwise expression $e$ of an arbitrary number $t\in \NN$ of variables has a nontrivial linear MBA representation $e = \sum_{j=1}^{2^t} a_j e_j$ with bitwise expressions $e_j\neq e$ and constants $a_j$ for $j=1,\ldots,2^t$. Moreover, there are infinitely many MBAs in total.

\subsection{Generation of Linear Mixed Boolean-Arithmetic Expressions}

Theorem~\ref{thm:zhou} implies a method for generating linear MBAs which is very contrary to the codebook approach proposed in other sources. While it suffers a bit from the very specific shape which generated expressions will have, its advantages are that it can choose from an infinite set of MBAs, and a linear MBA always exists for a given input.

For a fixed number $t$ of variables and another fixed number $s$ of bitwise expressions, any matrix $A \in B^{2^t \times s}$ containing only zeros and ones can be used for generating a linear MBA for $t$ variables over $B^n$ for some $n\in \NN$ if its columns are linearly dependent.

For given bitwise expressions $e_1,\ldots, e_s$ and its truth value matrix $A$ with linearly dependent columns, a solution of the linear equation system $AY=o$ gives the constants $a_j$ for a linear MBA $e = \sum_{j=1}^s a_j e_j$ that evaluates to zero for all possible inputs. As a consequence, any of those $s$ bitwise expressions, say, $e_j$ for some $j \in \{1,\ldots,s\}$, can be expressed as a linear combination of the other ones by multiplying $e$ by $a_j$'s multiplicative inverse in $B^n$. Note that if $A$'s columns were linearly independent, the only solution of the equation system would be the zero vector, which is of course not constructive.


Alternatively, one may generate all or a subset of $A$'s columns randomly and construct bitwise expressions fitting the corresponding truth values. This will add more variety, but if all bitwise expressions are generated randomly, it is in general not possible to find an MBA representing a bitwise expression (in place of zero) easily.


This is where the following extension of Theorem~\ref{thm:zhou} comes in handy.

\begin{corollary}\label{cor:ext}
	Let $n,s,t \in \NN$ and let the bitwise expressions $e_1,\ldots,$ $e_s$ and $e$ as well as the logical expressions $\overline{e}_1,\ldots,\overline{e}_s$ and the matrix $A$ be defined as in Theorem~\ref{thm:zhou}. Furthermore, let $f$ be another bitwise expression of $t$ variables and let $\overline{f}$ be its logical equivalent.
	
	Then $e \equiv f$ if and only if $AY_a=F$, where $Y_a = (a_1,\ldots, a_s)^T$ is the vector of $e$'s coefficients and $F=\left(\overline{f}(b_1),\ldots,\overline{f}(b_{2^t})\right)^T$ is the vector of $f$'s evaluations on all possible combinations of zeros and ones for the variables, restricted to one bit.
\end{corollary}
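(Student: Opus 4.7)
The plan is to reduce the statement directly to Theorem~\ref{thm:zhou} by examining the difference $e - f$. The key observation is that $e \equiv f$ holds if and only if $e - f \equiv 0$, and $e - f$ can be written as a linear MBA by absorbing $f$ as an additional bitwise summand with coefficient $-1 \in B^n$.

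Concretely, first I would set $g := e - f = \sum_{j=1}^s a_j e_j + (-1) f$ and view this as a linear MBA in the $s+1$ bitwise expressions $e_1, \ldots, e_s, f$. Next I would form the truth value matrix of these expressions under the same enumeration $b_1, \ldots, b_{2^t}$ of $B^t$ used for $A$; by construction it is the block matrix $A' := [A \mid F] \in B^{2^t \times (s+1)}$ obtained by appending $F$ as a final column to $A$. Then I would apply Theorem~\ref{thm:zhou} to $g$ with coefficient vector $Y' := (a_1, \ldots, a_s, -1)^T$, which yields $g \equiv 0$ if and only if $A' Y' = o$. Finally, expanding the block product gives $A' Y' = A Y_a - F$, so the condition $A' Y' = o$ is precisely $A Y_a = F$, which is the claim.

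The main obstacle I anticipate is of a minor bookkeeping nature: the entries of $A'$ and $F$ are bits in $B$, while the coefficient vector $Y'$ lives in $B^n$, so one has to justify that the bit-valued matrix acts on $(B^n)^{s+1}$ by the natural scalar multiplication (interpreting $B^n$ as $\mathbb{Z}/2^n\mathbb{Z}$) and that $-1 \in B^n$ is unambiguously the additive inverse of $1$. Both conventions are implicit in the setup of Theorem~\ref{thm:zhou}, so once they are made explicit the reduction is immediate. No further combinatorial or algebraic work should be needed; the corollary is really Theorem~\ref{thm:zhou} applied to one additional column whose coefficient is pinned to $-1$.
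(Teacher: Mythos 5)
Your proposal is correct and follows essentially the same route as the paper: the paper's proof likewise sets $e - f$, appends $f$'s truth column to $A$ to get the block matrix $(A, F)$ with coefficient vector $(a_1,\ldots,a_s,-1)^T$, and invokes Theorem~\ref{thm:zhou} to translate $e - f \equiv 0$ into $AY_a = F$. Your remark about the bit-valued matrix acting on coefficients in $B^n \cong \mathbb{Z}/2^n\mathbb{Z}$ is a sensible explicit note on a convention the paper leaves implicit.
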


\begin{proof}
	If $e \equiv f$, then $e_1 :=e-f\equiv 0$. Since $e_1$'s truth table matrix is given by $A_1=(A, F)$, i.e., $e$'s truth table matrix with an appended column containing $f$'s truth values, and its coefficient vector is $Y_1 = (a_1,\ldots, a_s,-1)$, it follows from Theorem~\ref{thm:zhou} that $A_1Y_1 = o$ with $o= (0,\ldots,0)^T$. That is, $\sum_{j=1}^s a_j \overline e_j(b_i) - \overline f(b_i) = 0$ for all $i=1,\ldots, 2^t.$ The correctness of this implication follows.
	
	The other one, i.e., that $AY_a=F$ implies that $e \equiv f$, is proven by performing these steps in reversed order.
\end{proof}

\ifacm\else \vspace{.7cm} \fi

This theorem allows one to find a linear MBA representing an arbitrary bitwise expression $f$ via a method equivalent to that described above.

Finally, we generalize this even more for affine integer combinations of bitwise expressions.

\begin{corollary}\label{cor:ext2}
	Let $n,s,t \in \NN$ and let the bitwise expressions $e_1,\ldots,e_s$ and $e$ as well as the logical expressions $\overline{e}_1,\ldots,\overline{e}_s$ and the matrix $A$ be defined as in Theorem~\ref{thm:zhou}. Furthermore, let, for some $m\in\NN$, $f=\sum_{k=1}^m \alpha_k f_k + \beta$ be an affine combination of bitwise expressions of $t$ variables with $\alpha_1,\ldots,\alpha_m,\beta \in B^n$. Additionally, let $g=\sum_{k=1}^m \alpha_k \overline f_k - \beta$, where we write $-\beta$ for an integer that sums up to zero with $\beta$ in the modular field $B^n$ and $\overline f_k$ is the logical equivalent to $f_k$ for $k=1,\ldots,m$.
	
	Then $e \equiv f$ if and only if $AY_a=F$, where $Y_a = (a_1,\ldots, a_s)^T$ is the vector of $e$'s coefficients and $F=\left(g(b_1),\ldots,g(b_{2^t})\right)^T$ is the vector of $g$'s evaluations on all possible combinations of zeros and ones for the variables.
\end{corollary}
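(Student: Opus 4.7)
My plan is to mirror the argument of Corollary~\ref{cor:ext}: set $h := e - f$ and show that $e \equiv f$ is equivalent to $h \equiv 0$ via Theorem~\ref{thm:zhou}.

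First I would recast $h$ as an ordinary linear combination of bitwise expressions, as Theorem~\ref{thm:zhou} requires. The $s+m$ expressions $e_1, \ldots, e_s, f_1, \ldots, f_m$ are natural constituents; the additive constant $\beta$ I would absorb by introducing a bitwise tautology such as $x_1 \uor \unot x_1$, whose truth value is identically $1$. In this way $h$ becomes a linear combination of $s+m+1$ bitwise expressions with coefficient vector $\tilde{Y} = (a_1, \ldots, a_s, -\alpha_1, \ldots, -\alpha_m, -\beta)^T$, and its truth-value matrix $\tilde{A}$ is $A$ augmented by the $m$ columns $\overline{f}_k(b_i)$ and a final all-ones column.

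Next I would invoke Theorem~\ref{thm:zhou}: $h \equiv 0$ iff $\tilde{A}\tilde{Y} = o$. Writing out the $i$-th row of this vector equation, isolating $\sum_{j=1}^{s} a_j \overline{e}_j(b_i)$ on one side and recognising the remaining terms as $g(b_i)$, gives exactly $A Y_a = F$ as claimed. The reverse implication follows by running the same equivalences backwards, exactly as in the proof of Corollary~\ref{cor:ext}.

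The main obstacle I anticipate is purely bookkeeping: one has to justify that the scalar $\beta$ may be treated as the coefficient of a genuine bitwise expression so that Theorem~\ref{thm:zhou} is applicable, and one has to keep track of the additive inverses $-\alpha_k$ and $-\beta$ in the modular ring $B^n$ so that the signs in $\tilde{Y}$ line up with the sign convention chosen for $g$ in the statement. Once these are pinned down, no further work remains.
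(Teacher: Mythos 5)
Your overall strategy coincides with the paper's: set $h = e - f$, absorb the constant into an all-ones bitwise expression so that Theorem~\ref{thm:zhou} becomes applicable, and read off $AY_a = F$ from the rows of $\tilde{A}\tilde{Y} = o$. However, the sign of the last coefficient in your $\tilde{Y}$ is wrong, and this is precisely the one nontrivial point of the corollary. The tautology $x_1 \uor \unot x_1$ does have logical truth value $1$ on every $b_i$ (so the appended column of $\tilde{A}$ is all ones, as you say), but as a bitwise expression over $B^n$ it evaluates to the all-ones word $\sum_{i=1}^n 2^{i-1} = 2^n - 1 \equiv -1 \pmod{2^n}$, not to $1$. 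Since the constant term of $h = \sum_j a_j e_j - \sum_k \alpha_k f_k - \beta$ is $-\beta$, the coefficient of the tautology in a valid decomposition of $h$ as a linear MBA must be $+\beta$, because $\beta\cdot(-1) = -\beta$. With your choice $-\beta$, the linear combination you feed into Theorem~\ref{thm:zhou} represents $e - f + 2\beta$ rather than $e - f$, and the $i$-th row of $\tilde{A}\tilde{Y} = o$ yields $\sum_{j=1}^s a_j\overline{e}_j(b_i) = \sum_{k=1}^m\alpha_k\overline{f}_k(b_i) + \beta = g(b_i) + 2\beta$ instead of $g(b_i)$.

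This sign reversal is not incidental bookkeeping to be "pinned down" later; it is the reason the statement defines $g$ with $-\beta$ even though $f$ carries $+\beta$, and it is essentially the entire content of the paper's proof. There is no bitwise expression of the variables whose value is the word $0\cdots 01$ (a constant bitwise expression is either all zeros or all ones in every bit position), so the constant can only be carried by the all-ones expression, whose value is $-1$ in $B^n$ but whose logical equivalent is $1$ in $B$; hence $\beta$'s sign flips when passing to the $1$-bit level. Once you replace $-\beta$ by $\beta$ in $\tilde{Y}$, the remainder of your argument, including running the equivalences backwards as in Corollary~\ref{cor:ext}, goes through.
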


\begin{proof}
	The proof works similarly as that of Corollary~\ref{cor:ext}. In order to apply Theorem~\ref{thm:zhou}, we want to express the constant term as a multiple of a bitwise expression which evaluates to the $n$-bit number which has ones everywhere, i.e., $\sum_{i=1}^n 2^{i-1}\times 1 = 2^n-1 \equiv -1 \mod 2^n$. That is, the coefficient $\beta$'s sign is reversed in the application of the theorem. The rest follows.
\end{proof}

\ifacm\else \vspace{.7cm} \fi

Corollary~\ref{cor:ext2} allows us to find arbitrarily complex MBAs with a nearly arbitrary number of variables representing a large variety of affine combinations of bitwise expressions. We list some simple examples with two variables here:

\begin{align*}
	&x+y \to 2\,((x \uand y) \uor (\unot x \uand \unot y)) -2\, (\unot x \uand y) \\ &\hspace{1.1cm}+ 3\,((\unot x \uand y) \uor (x \uand \unot y)) -2\cdot \unot y \\
	&3\,735\,936\,685\cdot \unot x \to -3\,735\,936\,685 \, (x \uand \unot y) + 3\,735\,936\,685 \, (y \uor \unot x) \\ &\hspace{1.1cm}+ 3\,735\,936\,685 \, ((\unot x \uand y) | (x \uand \unot y)) - 3\,735\,936\,685 \, y \\
	&3\,735\,936\,685\, (x\uxor y) + 49\,374 \to 3\,735\,911\,998 \cdot \unot x -24\,687 \, (x \uor \unot y) \\ &\hspace{1.1cm}- 3\,735\,936\,685 \, (y \uor \unot x) + 3\,735\,911\,998 \, (y \uor x)
\end{align*}

What the corollary additionally suggests is that an affine combination of bitwise expressions is a linear MBA; that is, there is no need for a definition of, say, affine MBAs. Against this background, the corollary is related to Theorem~1 in~\cite{mba-solver} while differently and probably more constructively proven. It states that two linear MBAs are equivalent if and only if their \textit{signature vectors} coincide. We restate it for the sake of completeness:

\begin{corollary}\label{cor:sign}
	Let $e_1$ and $e_2$ be linear MBAs over words of the same length $n$ with truth table matrices $A_1$ and $A_2$, resp., and coefficient vectors $Y_1$ and $Y_2$, resp. Then $e_1 \equiv e_2$ if and only if $A_1 Y_1 = A_2 Y_2$ or, equivalently, their linear combinations of logical expressions corresponding to their bitwise expressions evaluate to the same values for all $b_i$, $i = 1,\ldots,2^t$ as defined previously.
\end{corollary}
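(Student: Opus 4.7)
The natural plan is to reduce the equivalence $e_1 \equiv e_2$ to the vanishing of a single linear MBA and then invoke Theorem~\ref{thm:zhou} directly. Concretely, I would form the difference $e := e_1 - e_2$ and note that $e_1 \equiv e_2$ if and only if $e \equiv 0$. Since the sum (and negation) of linear MBAs is again a linear MBA, $e$ can be written as a single linear combination of the bitwise expressions appearing in $e_1$ and $e_2$ taken together.

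Next I would identify the truth table matrix and coefficient vector of $e$. If $e_1 = \sum_j a_j^{(1)} e_j^{(1)}$ and $e_2 = \sum_k a_k^{(2)} e_k^{(2)}$, then
$$e = \sum_j a_j^{(1)} e_j^{(1)} + \sum_k \left(-a_k^{(2)}\right) e_k^{(2)},$$
so its truth table matrix is the horizontal concatenation $A = (A_1 \mid A_2)$, and its coefficient vector is the stacked $Y = \left(Y_1^T, -Y_2^T\right)^T$. Applying Theorem~\ref{thm:zhou} to this combined linear MBA gives $e \equiv 0$ if and only if $A Y = o$, which by block multiplication is exactly $A_1 Y_1 - A_2 Y_2 = o$, i.e., $A_1 Y_1 = A_2 Y_2$.

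For the ``equivalently'' clause, I would just read off the components: the $i$-th entry of $A_\ell Y_\ell$ is $\sum_j a_j^{(\ell)} \overline{e}_j^{(\ell)}(b_i)$, which is precisely the evaluation at $b_i$ of the linear combination of logical expressions corresponding to the bitwise expressions of $e_\ell$. Hence $A_1 Y_1 = A_2 Y_2$ rewrites row by row as equality of those two affine integer expressions at every $b_i \in B_t$.

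I do not expect a genuine obstacle here; the only point requiring some care is that $e_1$ and $e_2$ generally involve different sets of bitwise expressions, so one must resist the temptation to compare $Y_1$ and $Y_2$ directly and instead work with the concatenated system. Once that is done, Theorem~\ref{thm:zhou} does all the real work.
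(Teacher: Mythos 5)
Your argument is correct and is exactly the derivation the paper intends: it is the same difference-and-concatenation trick used in the paper's proof of Corollary~\ref{cor:ext} (form the difference, append the columns and negated coefficients, and invoke Theorem~\ref{thm:zhou}), here applied with the full matrix $A_2$ and vector $-Y_2$ in place of the single column $F$ and coefficient $-1$. The paper itself leaves Corollary~\ref{cor:sign} unproved, merely restating it from the MBA-Solver paper, so your write-up supplies precisely the missing routine verification; your closing caution about not comparing $Y_1$ and $Y_2$ directly is well taken but poses no obstacle.
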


\section{Deobfuscation of Linear Mixed Boolean-Arithmetic Expressions}

The mixed Boolean-arithmetic transform technique is commonly considered to be a powerful obfuscation method due to the incompatibility of its operands. However, Theorem~\ref{thm:zhou} and specifically Corollary~\ref{cor:sign} prepare an algebraic method which may circumvent this issue. This method has been first elaborated and implemented in the tools \textit{MBA-Blast} and \textit{MBA-Solver} which outperform existing tools significantly regarding success frequency and runtime at the deobfuscation of MBAs that have a specific shape.

\subsection{MBA-Blast and MBA-Solver}

MBA-Blast and MBA-Solver are two highly related tools for efficient deobfuscation of MBAs. Both are available in a prototype state on Github~\cite{gh-mba-blast, gh-mba-solver}. While the former can simplify polynomial MBAs, the latter applies a more efficient approach for those and attempts to resolve nonpolynomial MBAs as well. However, for that it needs a hint of a polynomial MBA included in a nonpolynomial MBA, whose replacement by a variable would cause it to be polynomial, as well as an additional input, making it irrelevant for practical applications at least in this prototype state.

The tools as available on Github strongly suffer from the requirement for prior knowledge about MBAs to resolve. E.g., MBAs have to meet exactly the canonical representation used in Definition~\ref{def:poly} and the user has to know whether they are linear, polynomial or nonpolynomial (with additional required knowledge in the latter case). A check for MBAs to be linear or polynomial, resp., is not performed.

The tools are implemented for simplifying MBAs working on two to four variables. First of all, each space of bitwise expressions using $t \in \NN$ variables has a basis of $2^t$ expressions (e.g., $16$ expressions for $t=4$ variables) each bitwise expression can be represented as a linear combination of. 
The MBA-Blast algorithm starts by replacing each bitwise expression occuring in an input MBA by such a linear combination. This is done via their evaluation for all possible combinations of truth values for the variables and hence leveraging Theorem~\ref{thm:zhou}. Although this crucial finding dates back to the year 2007~\cite{zhou}, it has seemingly never been recognized to be readily usable for deobfuscation before due to its incorrect statement --- see~\cite{eyrolles} or also the MBA-Blast paper~\cite{mba-blast} whose authors claim to be the first to prove it.

In contrast to that, MBA-Solver computes a so-called \textit{signature vector} for a whole linear MBA leveraging Corollary~\ref{cor:sign}, again via an evaluation of its bitwise expressions for all possible combinations of truth values, but here those are immediately multiplied by the bitwise expressions' coefficients and summed up while the linear combinations have to be summed up with MBA-Blast in order to derive a linear combination for a whole MBA. Hence, MBA-Solver saves this step, and consequently significant runtime.

More involved simplification steps, including simplification using Python's \texttt{SymPy} package, are necessary for nonlinear MBAs. To the best of our understanding, different bases are used for MBA-Blast and MBA-Solver. While the former uses, e.g., for two variables the vector $(x, y, x\uand y, -1)$ as a basis, the latter uses $(\unot(x\uor y), \unot (x\uor\unot y), x\uand\unot y, x\uand y)$. The reason for this discrepancy is not known to the authors, but we assume that this is just an exemplary base and a full implementation of this prototype would, as indicated in~\cite{mba-solver}, try various bases and choose the simplest result.

Alternatively, if the resulting signature vector, now also computed for a whole linear MBA with MBA-Blast, is nice enough and corresponds to one single bitwise expression, this one is the simplification result. In this case this expression is found in a lookup table listing standard representations of bitwise expressions for all possible combinations of the variables' truth values.

While the step of finding a linear combination of basis expressions can be implemented genericly to work for an arbitrary number of variables, usage of a lookup table for finding single bitwise expressions is only possible for a low number of variables. Note that a lookup table for $t=4$ variables has $2^{2^4}=65\,536$ entries, and one for $t=5$ variables would have $2^{2^5}=4\,294\,967\,296$ entries.

Due to this restriction and since it cannot be expected in general that an MBA can be resolved to one single bitwise expression, the tools might miss the nicest solutions in many cases. However, this may be acceptable since their main purpose is to make expressions easier for verification using SMT solvers~\cite{gh-mba-solver} and linear combinations might be nice enough.

For input expressions which satisfy the requirements on their structure, MBA-Blast and MBA-Solver indeed clearly outperform other existing tools regarding success rate and runtime.

\subsection{Our Algorithm}\label{subsec:our}

\subsubsection{Overview}

As MBA-Blast and MBA-Solver, SiMBA is written in \texttt{Python}. Although it was developed without knowledge of MBA-Solver's exact technique, it uses a rather similar technique also involving a computation of what one may call a \textit{signature vector} in the first step. However, while MBA-Blast and MBA-Solver use Corollaries~\ref{cor:ext} and~\ref{cor:sign}, resp., and hence operate in the $1$-bit space $B=\{0,1\}$, we do not fully perform this transformation from $B^n$ (for $n\in\NN$) to $B$. Rather we immediately evaluate the whole linear MBA for the combinations of truth values and hence get vectors which live in $B^n$. The following theorem provides the formal basis for that.

\begin{theorem}\label{thm:main}
	Let $e$ and $f$ be linear MBAs over words of the same length $n$ and let $t\in\NN$ be their (maximum) number of variables. Then $e \equiv f$ if and only if $e(b_i) = f(b_i)$ for all $b_i \in B_t$, $i = 1,\ldots,2^t$, as defined previously. 
\end{theorem}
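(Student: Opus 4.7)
The forward direction is immediate: if $e \equiv f$ as functions on $(B^n)^t$, then they agree in particular on the $2^t$ tuples in $B_t$, where each bit value is lifted to the corresponding element $0, 1 \in B^n$. For the converse, my plan is to form $g := e - f$, which is again a linear MBA, and reduce the claim $g \equiv 0$ to Theorem~\ref{thm:zhou}. The task is then to relate the $B^n$-valued evaluations $g(b_i)$ to the single-bit signature $\sum_j a_j \overline{g}_j(b_i)$ that appears in that theorem.

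The key observation I plan to use is a bit-by-bit decomposition of bitwise expressions on binary inputs. For any bitwise expression $h$ of $t$ variables and any $b_i = (b_{i,1}, \ldots, b_{i,t}) \in \{0,1\}^t \subset (B^n)^t$, the operators $\uand$, $\uor$, $\uxor$, $\unot$ act independently on each bit position. Because the lowest bit of input $j$ is $b_{i,j}$ while all higher bits are $0$, the lowest bit of $h(b_i)$ equals $\overline{h}(b_i)$ and every higher bit equals the constant $\overline{h}(0,\ldots,0)$. Summing bit contributions gives the identity
\[
h(b_i) \equiv \overline{h}(b_i) + (2^n - 2)\, \overline{h}(0,\ldots,0) \pmod{2^n}.
\]

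Writing $g = \sum_j a_j g_j$ and applying this identity termwise yields
\[
g(b_i) = \sigma(b_i) + (2^n - 2)\, K,
\]
where $\sigma(b_i) := \sum_j a_j \overline{g}_j(b_i)$ and $K := \sigma(0,\ldots,0)$ is independent of $i$. Under the hypothesis $g(b_i) = 0$ for all $i$, specialising at $b_1 = (0,\ldots,0)$ gives $K + (2^n - 2) K = (2^n - 1) K \equiv 0 \pmod{2^n}$; since $2^n - 1$ is odd and hence a unit in $B^n$, this forces $K = 0$ and therefore $\sigma(b_i) = 0$ for every $i$. That is precisely the matrix equation $AY = o$ of Theorem~\ref{thm:zhou} for $g$, which delivers $g \equiv 0$ and thus $e \equiv f$.

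The step that will take most care is the bit-decomposition formula, since bitwise negations inside $h$ mean that $h(b_i)$ is generally not equal to $\overline{h}(b_i) \in \{0,1\}$ — the higher bits of the inputs are zero and therefore pick up the correction term $(2^n - 2)\,\overline{h}(0,\ldots,0)$. Tracking this correction through the linear combination and then eliminating it via the specialisation at the all-zeros input, together with the invertibility of $2^n - 1$ modulo $2^n$, is the only nontrivial piece. Once it is in place, the reduction to Theorem~\ref{thm:zhou} shows that the direct evaluation in $B^n$ adopted by SiMBA loses no information compared with the $1$-bit-space computation used by MBA-Blast and MBA-Solver.
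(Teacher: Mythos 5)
Your proof is correct and follows essentially the same route as the paper's: the same bit-position decomposition $h(b_i)=\overline{h}(b_i)+(2^n-2)\,\overline{h}(0,\ldots,0)$, the same specialisation at the all-zeros tuple, and the same use of the invertibility of $2^n-1$ modulo $2^n$ to kill the correction term. The only (inessential) difference is that you pass to $g=e-f$ and invoke Theorem~\ref{thm:zhou} directly, whereas the paper keeps $e$ and $f$ separate and appeals to Corollary~\ref{cor:sign}, which is itself just Theorem~\ref{thm:zhou} applied to a difference.
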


Note that the difference to Corollary~\ref{cor:sign} is that we evaluate the bitwise expressions rather than their logical ($1$-bit) equivalents for the $b_i$'s and hence get $n$-bit results. That implies that we do not have to decompose a linear MBA into its terms and factors, but we can evaluate it as a whole.

\ifacm\else \vspace{.7cm} \fi

\begin{proof}
	We only have to prove one direction since the other one is trivial. We assume that $e(b_i) = f(b_i)$ for all $b_i$, $i = 1,\ldots,2^t$ and write $e = \sum_{j=1}^s a_j e_j$ and $f = \sum_{j=1}^m \alpha_j f_j$ for $s,m \in \NN$, $a_1,\ldots, a_s,$ $\alpha_1,\ldots, \alpha_m \in B^n$ and bitwise expressions $e_1,\ldots, e_s,$ $f_1,\ldots, f_m$.
	
	We have for all $i \in \{1,\ldots,2^t\}$ that $\sum_{j=1}^s a_j e_j(b_i) = \sum_{j=1}^m \alpha_j f_j(b_i)$. Now let $\overline e_j$ be the logical equivalent of $e_j$ for $j=1,\ldots,s$ and $\overline f_j$ be the logical equivalent of $f_j$ for $j=1,\ldots,m$. Furthermore, for $i=1,\ldots,2^t$, we use the same notation for $b_i$ with one bit as for $b_i$ with $n$ bits, but we note that the $n$-bit version has zeros in the additional $n-1$ bits of its entries. 
	
	We want to show that $\sum_{j=1}^s a_j \overline e_j(b_i) = \sum_{j=1}^m \alpha_j \overline f_j(b_i)$ for all $i \in \{1,\ldots,2^t\}$. Then we may use Corollary~\ref{cor:sign}. We have the following for any $i\in \{1,\ldots,2^t\}$ after already interchanging the sums:
	
	\begin{align*}
		&\sum_{j=1}^s a_j \overline e_j(b_i) + \sum_{k=2}^n 2^{k-1} \sum_{j=1}^s a_j \overline e_j(o) \\ &\hspace{2cm} = \sum_{j=1}^m \alpha_j \overline f_j(b_i) + \sum_{k=2}^n 2^{k-1} \sum_{j=1}^m \alpha_j \overline f_j(o),
	\end{align*}
	where we write $o$ for the zero vector in $B^t$ and the sum $\sum_{k=2}^n 2^{k-1}$ is obviously constant ($2^n-2$, in more detail).

	If it holds that $\sum_{j=1}^s a_j \overline e_j(o) = \sum_{j=1}^m \alpha_j \overline f_j(o),$ the proof is finished. We consider the equation for the specific $b_{i^\prime}\in B_t$ which assigns zero to every variable:
	
	\begin{align*}
		&(2^n-1)\, \sum_{j=1}^s a_j \overline e_j(o) =  (2^n-1)\, \sum_{j=1}^m \alpha_j \overline f_j(o).
	\end{align*}
	This shows that $\sum_{j=1}^s a_j \overline e_j(o) = \sum_{j=1}^m \alpha_j \overline f_j(o)$.
\end{proof}

\ifacm\else \vspace{.7cm} \fi

Depending on the number of variables, the SiMBA algorithm is comprised of two or three steps described subsequently in more detail:
\begin{enumerate}
	\item For an input MBA $e$, a signature vector $F$ is derived by evaluating $e$ for all possible combinations of zeros and ones for the variables.
	\item Based on $F$, the input MBA is transformed into a linear combination of base bitwise expressions, using Theorem~\ref{thm:main}. This is in any case a valid, but not necessarily optimal solution.
	\item If the resulting linear combination uses fewer than three variables, various attempts to find a simpler expression via lookup tables are performed.
\end{enumerate}

The main advantage of our approach is that due to the immediate evaluation, one does not have to decompose an MBA into terms of bitwise expressions and their coefficients and hence does not require a concrete structure of input expressions. Optionally a check for the MBA's linearity can be performed. Moreover we expect a better performance especially for complex MBAs with a higher number of variables.

\subsubsection{Computing the Signature Vector}

In order to be able to use Theorem~\ref{thm:main}, we evaluate an input expression $e$ on all possible combinations of zeros and ones for the variables and obtain a vector $F=e(b_1,\ldots,b_{2^t}).$ In order to be most generic, for $k \in \{1,\ldots,2^t\}$ and $i \in \{1,\ldots,t\}$, $b_k$ assigns the value $1$ to the variable $x_i$ if $k$'s remainder after a division by $2^i$ is larger than $2^{i-1}$:
\begin{align*}
	b_1 &= (0,0,0,\ldots,0),\\
	b_2 &= (1,0,0,\ldots,0),\\
	b_3 &= (0,1,0,\ldots,0),\\
	b_4 &= (1,1,0,\ldots,0),\\
	&\hspace{5mm}\vdots\\
	b_{2^t} &= (1,1,1,\ldots,1).
\end{align*}

As an example, we consider the simple linear MBA $e(x,y) = 3\,735\,936\,685\,(x\uxor y) + 49\,374$. In $B$, we would get a vector $$(e^\prime(0,0), e^\prime(1,0), e^\prime(0,1), e^\prime(1,1)) = (0, 1, 1, 0)$$ for the bitwise part $e^\prime(x,y) = x\uxor y$. This vector would have to be multiplied by the coefficient and the constant would have to be subtracted (see Corollary~\ref{cor:ext2}) in order to obtain a vector $F_1^\prime$. In our setting, we immediately compute 
\begin{align*}F_1&=(e(0,0), e(1,0), e(0,1), e(1,1)) \\&= (49\,374, 3\,735\,986\,059, 3\,735\,986\,059, 49\,374) \\
	&= F_1^\prime + 2\times 49\,374.
\end{align*}

We see another difference when we consider the negation $f(x)=\unot x$. Rather than using the signature vector $$F_2^\prime=(f(0),f(1)) = (1, 0),$$ we use $$F_2 = (f(0),f(1)) = (-1,-2) = (-0-1, -1-1),$$ where we evaluate $f$ on $B$ in the former evaluation and on $B^n$ in the latter one.

\subsubsection{Finding a Linear Combination}

The next step is the derivation of a linear combination of basis elements for bitwise expressions with $t\in\NN$ variables. While in theory any basis can be used, we want to describe a very generic approach. If we enumerate the variables as $x_1,\ldots,x_t$, the following is a basis for $t$ variables: 
\begin{align*}
	(&1,  \\
	&x_1, \ldots, x_t, \\
	&x_1 \uand x_2, \ldots, x_1 \uand x_t, x_2 \uand x_3, \ldots, x_{2} \uand x_t, \ldots, x_{t-1} \uand x_t, \\
	&x_1 \uand x_2\uand x_3, x_1 \uand x_2\uand x_4, \ldots, x_{t-2}\uand x_{t-1}\uand x_t, \\
	&\vdots\\
	&x_{1} \uand \cdots \uand x_t). \\
\end{align*}

It has the following truth table $A$: \footnotesize
\begin{align*}
	\kbordermatrix{
		& 1 & x_1 & x_2 & \cdots & x_1 \uand x_2 & x_{1} \uand x_3 & \cdots &x_{1} \uand \cdots \uand x_t\\
		(0,0,0,\ldots,0) & 1 & 0 & 0 & \cdots & 0 & 0 & \cdots & 0 \\
		(1,0,0,\ldots,0) & 1 & 1 & 0 & \cdots & 0 & 0 & \cdots & 0 \\
		(0,1,0,\ldots,0) & 1 & 0 & 1 & \cdots & 0 & 0 & \cdots & 0 \\
		(1,1,0,\ldots,0) & 1 & 1 & 1 & \cdots & 1 & 0 & \cdots & 0 \\
		(0,0,1,\ldots,0) & 1 & 0 & 0 & \cdots & 0 & 0 & \cdots & 0 \\
		(1,0,1,\ldots,0) & 1 & 1 & 0 & \cdots & 0 & 1 & \cdots & 0 \\
		& \vdots & \vdots & \vdots & \ddots & \vdots & \vdots & \ddots & \vdots \\
		(1,1,1,\ldots,1) & 1 & 1 & 1 & \cdots & 1 & 1 & \cdots & 1
	}
\end{align*}
\normalsize

If we now want to solve a linear equation system $AY = F$, where $F$ is a vector derived in the first step, we do not need much linear algebra and no utilities. If we eliminate $A$'s columns and rows wisely, we permanently have a row which only has one $1$ in it until we have computed all coefficients in $Y$.

In more detail, the first row in which we have a $1$ for a variable $x_i$ for $i \in \{1,\ldots,t\}$ is the $(2^{i-1}+1)$-th one and the first row in which we have a $1$ for a conjunction $x_{i_1} \uand \cdots \uand x_{i_m}$ with $i_1 < \cdots < i_m$ for $i_j \in \{1,\ldots,t\}$, $j\in\{1,\ldots,m\}$, $m \in \NN$ is the row number $\sum_{j=1}^m 2^{i_j-1} + 1$. These numbers indicate $F$'s entries which we find conjunctions' coefficients in. Similarly, we constantly have to apply the corresponding manipulations for all of $F$'s entries with indices of rows in which $A$'s column has a $1$ too.

This approach cannot fail if we handle the conjunctions in an order depending on their numbers of variables, starting with the constant in row one. As soon as this is finished, we have a first viable solution in terms of a linear combination of base expressions.

Returning to the first example, $F_1$'s first entry $49\,374$ is the constant term which has to be subtracted from $F_1$'s other entries to get $(0, 3\,735\,936\,685, 3\,735\,936\,685, 0)$. Now, the coefficient for both variables --- we call them $x$ and $y$ again --- is $3\,735\,936\,685$, and the vector is modified to $(0, 0, 0, -2\times 3\,735\,936\,685)$. This implies the following linear combination:
\begin{align*}
	e(x,y) &\equiv 49\,374 + 3\,735\,936\,685\, x + 3\,735\,936\,685\, y \\ &- 7\,471\,873\,370\, (x\uand y).
\end{align*}

\subsubsection{Finding a Simpler Solution}

In order not to miss very simple results, we perform an alternative refinement attempt similar to that of MBA-Blast and MBA-Solver, but more involved. For that we use truth tables for $1 \leq t \leq 3$. If an input expression has more than three variables, but not all of them appear in the linear combination, we can reduce the number of variables and potentially perform this refinement step too.

As indicated before, we cannot identify a negation just by flipped bits, but we can use the identity $\unot x = -x-1.$ We perform the following attempts to find a result which has fewer terms than the linear combination in the following order using a lookup table and finish as soon as an attempt is successful:
\begin{enumerate}
	\item If all entries of $F$ coincide, we have a constant expression.
	\vspace{2mm}

	\textit{Example for $t=2$:} The vector $$(49\,374,49\,374,49\,374,49\,374)$$ obviously corresponds to $$e(x,y) = 49\,374.$$
	
	\item If $F$ has two unique entries and its first entry is zero, we replace the nonzero element $a$ by $1$, find the lookup table's entry for the corresponding truth vector and multiply the found expression by $a$.
	\vspace{2mm}
	
	\textit{Example for $t=2$:} The vector $$(0,49\,374,49\,374,0)$$ corresponds to $$e(x,y) = 49\,374\, (x\uxor y).$$
	
	\item If $F$ has two unique entries $a$ and $b$, both of them are nonzero, w.l.o.g., $b \equiv 2a \mod 2^n$, and $F$'s first entry is $a$, we can express the result in terms of a negated single expression. We replace all occurences of $a$ by zeros and that of $b$ by ones, find the corresponding expression in the lookup table, negate it, and multiply it by $-a$.
	\vspace{2mm}
	
	\textit{Example for $t=2$:} For the vector $$(49\,374, 98\,748, 98\,748, 49\,374),$$ we have $a=49\,374$ and $b=98\,748$. The vector $(0,1,1,0)$ corresponds to $x\uxor y$. Hence we have a result $$e(x,y) = -49\,374\cdot \unot(x\uxor y).$$
	
	\item If $F$ has two unique entries $a$ and $b$, but the previous cases do not apply, and $F$'s very first entry is $a$, we first identify $a$ as the constant term. Then we find an expression with ones exactly where $F$ has the entry $b$ in the lookup table, multiply it by $b-a$ and add the term to the constant.
	\vspace{2mm}
	
	\textit{Example for $t=2$:} For the vector $$(49\,374, 3\,735\,936\,685, 3\,735\,936\,685, 49\,374),$$ we have a constant $a=49\,374$ and, with $b=3\,735\,936\,685$, relate the vector $(0,b,b,0)$ to $x\uxor y$ and arrive at the expression $$e(x,y) =  49\,374 + 3\,735\,887\,311\,(x \uxor y).$$
	
	\item If $F$ has two unique nonzero entries $a$ and $b$ and its first one is zero, we split it into two vectors with ones where $F$ has entries $a$ or $b$, resp., find the corresponding expressions in the lookup table, multiply them by $a$ and $b$, resp., and add the terms together.
	\vspace{2mm}
	
	\textit{Example for $t=2$:} We split the vector $$(0, 3\,735\,936\,685, 3\,735\,936\,685, 49\,374)$$ into the vectors $(0,0,0,a)$ and $(0,b,b,0)$ with $a=49\,374$ and $b=3\,735\,936\,685$, relate the former to $x\uand y$ and the latter to $x\uxor y$ and arrive at the expression $$e(x,y)=49\,374\, (x\uand y) + 3\,735\,936\,685\, (x\uxor y).$$
	
	\item If $F$ has three unique nonzero entries $a$, $b$ and $c$ and its first one is $0$, we try to express one of them as a sum of the others modulo $2^n$, e.g., $a \equiv b + c.$ In that case we split $F$ into two vectors with ones where $F$ has entries $b$ or $c$, resp., or $a$, find the corresponding expressions in the lookup table, multiply them by $b$ and $c$, resp., and add the terms together.
	\vspace{2mm}
	
	\textit{Example for $t=2$:} We split the vector $$(0, 3\,735\,936\,685, 3\,735\,887\,311, 49\,374)$$ into the vectors $(0,a,a,0)$ and $(0,b,0,b)$ with  $a=3\,735\,887\,311$ and $b=49\,374$, relate the former to $x\uxor y$ and the latter to $x$ and arrive at the expression $$e(x,y)=3\,735\,887\,311\, (x\uxor y) + 49\,374\, x.$$
	
	\item If $F$ has three unique nonzero entries $a$, $b$ and $c$, its first one is $0$ and the previous case does not apply, we split it into three vectors with ones where $F$ has entries $a$, $b$ or $c$, resp., find the corresponding expressions in the lookup table, multiply them by $a$, $b$ and $c$, resp., and add the terms together.
	\vspace{2mm}
	
	\textit{Example for $t=2$:} We split the vector $$(0, 49\,374, 3\,735\,936\,685, 201)$$ into the vectors $(0,a,0,0)$, $(0,0,b,0)$ and $(0,0,0,c)$ with $a=49\,374$, $b=3\,735\,936\,685$ and $c=201$, relate them to $x\uand \unot y$, $\unot(x\uor\unot y)$ and $x\uand y$ and arrive at the expression 
	\begin{align*}
		e(x,y) &=49\,374\, (x\uand \unot y) + 3\,735\,936\,685\cdot \unot(x\uor\unot y) \\ &+ 201\, (x\uand y),
	\end{align*}
but we neglect it since it is not simpler than the linear combination $$49\,374\, x + 3\,735\,936\,685\, y \\ + 201\, (x\uand y).$$

	\item If $F$ has four unique nonzero values and its first one is nonzero, proceed as above after subtracting the first entry, i.e., the constant term, and add that to the final result.
	\vspace{2mm}
	
	\textit{Note for $t=2$:} That is not relevant for $t=2$ since the resulting term would never be simpler than the linear combination of base expressions, but it may be for $t=3$.
\end{enumerate}

This list of attempts can in theory be extended, but a linear combination of a higher number of bitwise terms from the lookup table might soon appear to be more complex than a linear combination of an even higher number of very simple base expressions. For $t=2$, the algorithm will definitely find a simplest result.

\subsection{Verification and Comparison}

In order to classify our algorithm \textit{SiMBA}, we perform four experiments. In the first, we use a dataset of $10\,000$ linear MBAs with a varying number of variables provided by the Github repository of \textit{NeuReduce}~\cite{gh-neureduce} to verify correctness and universality.

Thereafter, we compare the algorithm with the most relevant peer tools. We can confirm the results of~\cite{mba-blast} and~\cite{mba-solver} that Arybo, SSPAM and Syntia do not reliably simplify MBAs in general and, if they do, need significantly more time. Hence, we are content with a comparison to MBA-Blast and MBA-Solver only. Corresponding to those, we expect that they do similar things, but MBA-Solver is more efficient since it saves a simplification step. Especially for expressions which can be simplified to one single bitwise expression, the first step is omitted completely.

In the second experiment, we use MBA-Blast (partially), MBA-Solver and SiMBA to simplify all linear MBAs provided by the former's Github repository~\cite{gh-mba-solver}. In the third one, we run experiments on various MBAs that we generated for simple affine combinations of bitwise expressions using Corollary~\ref{cor:ext2}. The final experiment indicates that an affine output encoding does not effect SiMBA's success rate and runtime.

All experiments are performed on a Debian Bullseye virtual machine run on an Intel Core i7-12700K CPU and $3.6$ GHz. The runtime was measured using \texttt{Python 3.9} with the \texttt{time} package. Furthermore, we use $n=64$ bits in all experiments.

\subsubsection{Verification on NeuReduce's Dataset}

In a first experiment, we run SiMBA on NeuReduce's test dataset containing $10\,000$ expressions with two, three, four or five variables. 

\begin{table}[h]
	\ifacm\else
		\centering
	\fi
	\begin{tabular}{|c|c|c|c|}
		\hline
		& \bfseries Total & \bfseries Solved & \bfseries Runtime \\
		\hline
		\hline
		2 variables  & $4\,000$ & $4\,000$ & $0.00024 \, s$\\
		\hline
		3 variables  & $4\,560$ & $4\,560$ & $0.00069 \, s$\\
		\hline
		4 variables  & $441$ & $441$ & $0.00084 \, s$\\
		\hline
		5 variables  & $999$ & $999$ & $0.00147 \, s$ \\
		\hline
	\end{tabular}
	\caption{Verification of SiMBA on the dataset of linear MBAs provided by NeuReduce's Github repository~\cite{gh-neureduce}}\label{tab:neureduce}
\end{table}

We see in Table~\ref{tab:neureduce} that the entire dataset could be solved by our algorithm, meaning that each expression was simplified to the corresponding simpler expression also contained in the dataset. Moreover, the results show that the runtime only increases moderately with the number of variables and even MBAs with five variables can be simplified fast.

\subsubsection{Comparison on MBA-Solver's Dataset}

In Table~\ref{tab:1}, we compare SiMBA to MBA-Blast and MBA-Solver on the dataset provided by the latter's Github repository. Here we compare the results of the simplification by either tool with the corresponding simpler expression that is also provided in this dataset. Since all tools simplify the complex MBAs and their corresponding simpler version to exactly the same expressions in all cases, we do not have to verify their equivalence using any SMT solver. To some degree we have to rely on the MBAs' correct generation by the dataset providers because some MBAs are too complex for a verification using any SMT solver without prior simplification. The dataset includes $1\,008$ expressions with two, three or four variables.

\begin{table}[h]
	\ifacm\else
		\centering
	\fi
	\begin{tabular}{|c|c|c|c|}
		\hline
		\multirow{3}{*}{\bfseries Tool} & \multicolumn{3}{|c|}{\bfseries Average runtime} \\
		\cline{2-4}
		& \bfseries 2 variables & \bfseries 3 variables & \bfseries 4 variables \\
		&  ($551$ expr.) &  ($350$ expr.) &  ($107$ expr.) \\
		\hline
		\hline
		MBA-Blast & $0.02501 \, s$ & $0.06726 \, s$ & --- \\
		\hline
		MBA-Solver & $0.00047 \, s$ & $0.00121 \, s$ & $0.14362 \, s$ \\
		\hline
		SiMBA & $0.00024 \, s$ & $0.00116 \, s$ & $0.00257 \, s$ \\
		\hline
	\end{tabular}
	\caption{Comparison of MBA-Blast, MBA-Solver and SiMBA on the dataset of linear MBAs provided by MBA-Solver's Github repository~\cite{gh-mba-solver}}\label{tab:1}
\end{table}

We see that our algorithm runs significantly faster than MBA-Blast and also faster --- especially for four variables --- than MBA-Solver on the average. Unfortunately we cannot run MBA-Blast for four variables since it is lacking any implementation for that case.

\subsubsection{Comparison on Self-Generated MBAs}

Leveraging Corollary~\ref{cor:ext2}, we have generated $1\,000$ different MBAs for each of the following very simple expressions and a varying number of variables:

\begin{alignat*}{2}
	&e_1(x,y) &&= x+y,\\
	&e_2 &&= 49\,374, \\
	&e_3(x) &&=  3\,735\,936\,685\, x + 49\,374,\\
	&e_4(x,y) &&=  3\,735\,936\,685\, (x\uxor y) + 49\,374,\\
	&e_5(x) &&=  3\,735\,936\,685 \cdot \unot x.
\end{alignat*}

We then simplified those expressions with the various tools. Previously we had to modify MBA-Blast as well as MBA-Solver in order to accept our input expressions without impacting their logic or runtime. Specifically we had to adapt the structure of our input expressions in order to use the canonical representation of linear MBAs and we had to adapt the variable names.

Most importantly, we had to make sure that occuring constants are reduced modulo $2^n$, where $n$ is the number of bits. MBA-Blast and MBA-Solver do not care about this number of bits, but this has two serious drawbacks: On the one hand we might not recognize the equality of expressions which are equivalent for a certain number of bits, but not without any modular reduction. On the other hand, the numeric \texttt{NumPy} types internally use \texttt{C} integers and throw an error if an overflow occurs for them.

\begin{table}[h]
	\ifacm\else
		\centering
	\fi
	\begin{tabular}{|c|c|c|}
		\hline
		\multirow{2}{*}{\bfseries Expr.} & \multicolumn{2}{|c|}{\bfseries Average runtime of MBA-Blast} \\
		\cline{2-3}
		 & \bfseries 2 variables & \bfseries 3 variables \\
		\hline
		\hline
		$e_1$ & $0.02695 \, s$ & $0.05366 \, s$  \\
		\hline
		$e_2$ & $0.02291 \, s$ & $0.04833 \, s$ \\
		\hline
		$e_3$ & $0.02851 \, s$ & $0.05753 \, s$ \\
		\hline
		$e_4$ & $0.03602 \, s$ & $0.05699 \, s$ \\
		\hline
		$e_5$ & $0.02782 \, s$ & $0.05459 \, s$  \\
		\hline
	\end{tabular}
	\caption{Runtime of MBA-Blast on 1\,000 MBAs generated for five expressions}\label{tab:mba_blast}
\end{table}

MBA-Blast simplifies, independently of the number of variables, all expressions for $e_1$, $e_2$, $e_3$ and $e_5$ --- partially due to our modifications --- exactly to those simpler expressions while expressions for $e_4(x,y) =  3\,735\,936\,685\, (x\uxor y) + 49\,374$ are simplified to \begin{align*}
3\,735\,936\,685\,x&+3\,735\,936\,685\,y+18\,446\,744\,066\,237\,678\,246\,(x\uand y)\\&+49\,374
\end{align*}
since MBA-Blast only finds simple terms if an expression consists of only one term and this is the linear combination of base expressions. Table~\ref{tab:mba_blast} shows that MBA-Blast has a satisfying and stable runtime for two and three variables while it can, in the available implementation, not be used for a higher number of variables. 

\begin{table}[h]	
	\ifacm\else
		\centering
	\fi
	\begin{tabular}{|c|c|c|c|}
		\hline
		\multirow{2}{*}{\bfseries Expr.} & \multicolumn{3}{|c|}{\bfseries Average runtime of MBA-Solver} \\
		\cline{2-4}
		& \bfseries 2 variables & \bfseries 3 variables & \bfseries 4 variables \\
		\hline
		\hline
		$e_1$ & $0.00074 \, s$ & $0.00091 \, s$ & $0.12761 \, s$  \\
		\hline
		$e_2$ & $0.00052 \, s$ & $0.00092 \, s$ & $0.12352 \, s$ \\
		\hline
		$e_3$ & $0.00026 \, s$ & $0.00150 \, s$ & $0.12802 \, s$ \\
		\hline
		$e_4$ & $0.00041 \, s$ & $0.00160 \, s$ & $0.12683 \, s$ \\
		\hline
		$e_5$ & $0.00072 \, s$ & $0.00134 \, s$ & $0.12871 \, s$  \\
		\hline
	\end{tabular}
	\caption{Runtime of MBA-Solver on 1\,000 MBAs generated for five expressions}\label{tab:mba_solver}
\end{table}

MBA-Solver simplifies, independently of the number of variables, all expressions for $e_2$ and $e_5$ --- again partially due to our modifications --- exactly to those simpler expressions. Unfortunately, due to their basis choice, the simple expression $e(x,y) = x+y$ is missed, and "simplified", depending on the number of variables, to one of the following expressions:
\begin{align*}
	&1\cdot\unot(x\uor\unot y)+1\,(x\uand\unot y)+2\,(x\uand y)\\
	&1\cdot\unot(x\uor (\unot y\uor z))+1\cdot\unot(\unot x\uor (y\uor z))+2\cdot\unot(\unot x\uor (\unot y\uor z))\\ &\hspace{1cm}+1\,(\unot x\uand (y\uand z))+1\,(x\uand (\unot y\uand z))+2\,(x\uand (y\uand z)) \\
	&1\cdot\unot(x\uor (\unot y\uor (z\uor t)))+1\cdot\unot(\unot x\uor (y\uor (z\uor t)))\\
	&\hspace{1cm}+2\cdot\unot(\unot x\uor (\unot y\uor (z\uor t)))+1\cdot\unot(x\uor (\unot y\uor (\unot z\uor t)))\\
	&\hspace{1cm}+1\cdot\unot(\unot x\uor (y\uor (\unot z\uor t)))+2\cdot\unot(\unot x\uor (\unot y\uor (\unot z\uor t)))\\
	&\hspace{1cm}+1\,(\unot x\uand (y\uand (\unot z\uand t)))+1\,(x\uand (\unot y\uand (\unot z\uand t)))\\
	&\hspace{1cm}+2\,(x\uand (y\uand (\unot z\uand t)))+1\,(\unot x\uand (y\uand (z\uand t)))\\
	&\hspace{1cm}+1\,(x\uand (\unot y\uand (z\uand t)))+2\,(x\uand (y\uand (z\uand t)))
\end{align*}

Using Z3, it takes $0.06561$ seconds to verify the equivalence to $e_1(x,y)=x+y$ for the first result, $1.01100$ seconds for the second one and $16.63006$ seconds for the third one.

Even more complex results are obtained for $e_3$ and for $e_4$. It is of course not satisfying to get such complex expressions for very simple ones. However, as already mentioned, if MBA-Solver would try different bases as well --- and especially $(x,y,x\uand y,-1)$ --- it would find a simpler solution. Apart from that, as can be seen in Table~\ref{tab:mba_solver}, MBA-Solver works particularly fast for three variables and much slower for four variables. A possible reason is that the tool has to parse a very large lookup table.\footnote{Further experiments suggest that outsourcing the parsing to a preprocessing step would save about $65$ per cent of the runtime for four variables. Furthermore, omitting the search for a solution using exactly one bitwise expression would make (nearly) the whole lookup table obsolete and save more than $98$ per cent of the runtime in total.} Unfortunately we cannot run the tool for higher numbers of variables since its public implementation is not generic enough and restricted to up to four variables.\footnote{For five variables, we generated a lookup table only containing the $2^5 = 32$ base expressions, disabled the search for a solution using exactly one bitwise expression, which would require the whole lookup table, and performed further adaptions to verify that MBA-Solver has a potential to run fast also for $t=5$.}

\begin{table}[h]
	\ifacm\else
		\centering
	\fi
	\begin{tabular}{|c|c|c|c|}
		\hline
		\multirow{2}{*}{\bfseries Expr.} & \multicolumn{3}{|c|}{\bfseries Average runtime of SiMBA} \\
		\cline{2-4}
		& \bfseries 2 variables & \bfseries 3 variables & \bfseries 4 variables \\
		\hline
		\hline
		$e_1$ & $0.00019 \, s$ & $0.00117 \, s$ & $0.00550 \, s$  \\
		\hline
		$e_2$ & $0.00010 \, s$ & $0.00045 \, s$ & $0.00528 \, s$ \\
		\hline
		$e_3$ & $0.00020 \, s$ & $0.00109 \, s$ & $0.00544 \, s$ \\
		\hline
		$e_4$ & $0.00024 \, s$ & $0.00100 \, s$ & $0.00564 \, s$ \\
		\hline
		$e_5$ & $0.00022 \, s$ & $0.00108 \, s$ & $0.00543 \, s$  \\
		\hline
	\end{tabular}
	\caption{Runtime of SiMBA on 1\,000 MBAs generated for five expressions}\label{tab:our}
\end{table}

For SiMBA, all simplified expressions were exactly equal to the corresponding simple expressions. As Table~\ref{tab:our} suggests, it has a very competitive runtime and outperforms especially MBA-Blast. The algorithms' runtimes for $e_1$ are additionally compared in Figure~\ref{fig:runtimes}; they would be similar for other expressions. Unfortunately we are missing a comparison to MBA-Solver for more than four variables as well as an explanation for MBA-Solver's bad performance for four variables.

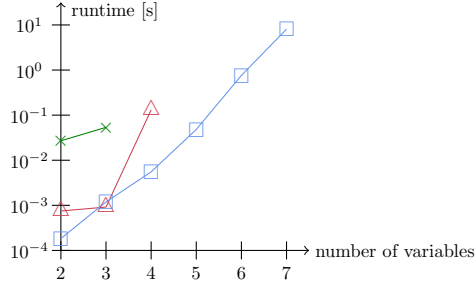
\begin{figure}
	\centering
	\begin{tikzpicture}[scale=0.6]
		\coordinate(blast2) at (2, {log10(0.026950)});
		\coordinate(blast3) at (3, {log10(0.053655)});
		
		\coordinate(solv2) at (2, {log10(0.000743)});
		\coordinate(solv3) at (3, {log10(0.000910)});
		\coordinate(solv4) at (4, {log10(0.127610)});
		
		\coordinate(own2) at (2, {log10(0.000187)});
		\coordinate(own3) at (3, {log10(0.001170)});
		\coordinate(own4) at (4, {log10(0.005502)});
		\coordinate(own5) at (5, {log10(0.048051)});
		\coordinate(own6) at (6, {log10(0.760468)});
		\coordinate(own7) at (7, {log10(8.075448)});
		\coordinate(own8) at (8, {log10(0.005502)});
		\coordinate(own9) at (9, {log10(0.005502)});
		\coordinate(own10) at (10, {log10(0.005502)});
		
		\axeslinlog{2}{7}{-4}{1}{1}{1}
		\node[scale=0.7] at (9.4,-4) {number of variables};
		\node[scale=0.7] at (3.2,1.3) {runtime [s]};
		
		\draw[ao] (blast2) -- (blast3);
		\draw[brickred] (solv2) -- (solv3) -- (solv4);
		
		\draw[cornflowerblue] (own2) -- (own3) -- (own4) -- (own5) -- (own6) -- (own7);
		
		\node[ao, scale=0.8] at (blast2) {$\times$};
		\node[ao, scale=0.8] at (blast3) {$\times$};
		
		\node[brickred, scale=0.8] at (solv2) {$\triangle$};
		\node[brickred, scale=0.8] at (solv3) {$\triangle$};
		\node[brickred, scale=0.8] at (solv4) {$\triangle$};
		
		\node[cornflowerblue, scale=0.8] at (own2) {$\square$};
		\node[cornflowerblue, scale=0.8] at (own3) {$\square$};
		\node[cornflowerblue, scale=0.8] at (own4) {$\square$};
		\node[cornflowerblue, scale=0.8] at (own5) {$\square$};
		\node[cornflowerblue, scale=0.8] at (own6) {$\square$};
		\node[cornflowerblue, scale=0.8] at (own7) {$\square$};
	\end{tikzpicture}
	\caption{Comparison of the runtimes of MBA-Blast (\textcolor{ao}{$\times$}), MBA-Solver (\textcolor{brickred}{$\triangle$}) and SiMBA (\textcolor{cornflowerblue}{$\square$}) on a logarithmic scale for MBAs for $e_1$}\label{fig:runtimes}
\end{figure}

\subsubsection{Encoded MBAs}

In order to make MBAs in general harder to solve, its inputs and/or its output can be encoded using specific functions. We provide some experiments on MBAs whose outputs are encoded using affine functions $f: B^n \to B^n$, $f(x) = ax+b$ where $a,b \in B^n$ are randomly determined. Note that MBA-Blast and MBA-Solver cannot deal with this encoding since everything would have to be multiplied out in order to have the canonical representation as stated in Definition~\ref{def:linear}, and that input encoding would cause $e_4$ and $e_5$ to be nonpolynomial.

In Table~\ref{tab:encoded}, we consider runtimes of the simplification of encoded versions of the expressions used in the previous section, i.e., $$e_1^\prime(x,y) = a(x+y) + b\ \text{for random } a,b \in B^{64}$$ and equivalently for $e_2^\prime,\ldots,e_5^\prime.$

\begin{table}[h]
	\ifacm\else
		\centering
	\fi
	\begin{tabular}{|c|c|c|c|}
		\hline
		\multirow{2}{*}{\bfseries Expr.} & \multicolumn{3}{|c|}{\bfseries Average runtime of SiMBA} \\
		\cline{2-4}
		& \bfseries 2 variables & \bfseries 3 variables & \bfseries 4 variables \\
		\hline
		\hline
		$e_1^\prime$ & $0.00020 \, s$ & $0.00118 \, s$ & $0.00484 \, s$  \\
		\hline
		$e_2^\prime$ & $0.00010 \, s$ & $0.00029 \, s$ & $0.00481 \, s$ \\
		\hline
		$e_3^\prime$ & $0.00021 \, s$ & $0.00125 \, s$ & $0.00460 \, s$ \\
		\hline
		$e_4^\prime$ & $0.00020 \, s$ & $0.00098 \, s$ & $0.00516 \, s$ \\
		\hline
		$e_5^\prime$ & $0.00015 \, s$ & $0.00110 \, s$ & $0.00547 \, s$  \\
		\hline
	\end{tabular}
	\caption{Runtime of SiMBA on 1\,000 MBAs with output encoding generated for five expressions}\label{tab:encoded}
\end{table}

As expected, the runtimes do not significantly deviate from that for the non-encoded MBAs. Each encoded MBA has been simplified to the same expression as its simpler equivalent, which is of course encoded with the very same function.

\section{Conclusion}

The algorithm described in this paper is designed to simplify linear MBAs, which clearly is a restriction. Regarding the success at simplification of linear MBAs and the runtime taken, it is more than competitive with comparable tools. 

MBA-Blast and MBA-Solver are efficient tools which are based on a groundbreaking transformation between $B=\{0,1\}$ and $B^n$ for any $n\in\NN$ that dates back to 2007~\cite{zhou}. In the publicly available implementation, MBA-Solver uses a lookup table in all cases and hence cannot be generalized to arbitrary variable counts. 

We see that MBA-Solver's runtime increases drastically with four variables, but we cannot observe more of a trend since we cannot run it for higher numbers of variables. It uses an unhandy basis for bitwise expressions, and although that can, according to the paper~\cite{mba-solver}, be changed, solving the equation system would cause much more effort then: While MBA-Blast uses \texttt{NumPy}'s \texttt{linalg.solve} function for solving it, MBA-Solver leverages the fact that all basis elements have exactly one $1$ in their truth vectors and hence the equation system is trivially solved.

We summarize the main characteristics of SiMBA as follows:
\begin{enumerate}
	\item It does not require any specific structure of input expressions as long as they can be rewritten as linear MBAs.
	
	\item It checks whether input expressions are indeed linear.
	
	\item It does not require any specific notation of the variables nor any declaration of them, but parses them automatically.
	
	\item It aims for nicest solutions in all relevant cases for expressions with at most three variables. In more detail, it does so for expressions whose output depends on not more than three variables, i.e., after dropping unnecessary variables.
	
	\item It is implemented to work for an arbitrary number of variables.
	
	\item It can deal with arbitrarily high constants which are constantly reduced modulo $2^n$ for the word length $n\in\NN$.
	
	\item It does not use any \texttt{Python} package for simplification such as \texttt{SymPy} and no package such as \texttt{NumPy} for the solution of the linear equation system.
	
	\item As our experiments suggest, it has a very good performance as compared to other existing tools.
	
	\item It can handle linear MBAs whose outputs are encoded using affine functions.
	
	\item It can be extended in a straightforward way to the simplification of polynomial and nonpolynomial MBAs.
\end{enumerate}

We particularly want to point out that one main application of MBAs is given by \textit{opaque predicates}, i.e., predicates which constantly evaluate to either $0$ or $1$, and SiMBA can easily simplify all linear MBAs used in opaque predicates to these constants.

As a major contribution, we proved that a linear MBA can be written as a linear combination of base bitwise expressions via a direct evaluation for tuples of zeros and ones and hence a transformation from $B^n$, for $n\in\NN$, to $B=\{0,1\}$ is not necessary --- see Theorem~\ref{thm:main}.

\section{Outlook to Nonlinear Mixed Boolean-Arithmetic Expressions}

As mentioned previously, MBA-Blast and MBA-Solver are also applicable to polynomial MBAs. Just like those, SiMBA is extendable to this class of MBAs and to nonpolynomial MBAs as well after identifying all linear subexpressions of a nonlinear MBA. Unfortunately, it seems that a global straightforward approach via evaluation of nonlinear MBAs on a small set of input values is not possible.

Table~\ref{tab:teaser} shows a comparison of an adaption of SiMBA with MBA-Solver for the simplification of linear, polynomial and nonpolynomial MBAs provided by MBA-Solver's Github repository~\cite{gh-mba-solver}. While both tools can, in combination with Z3, verify all MBAs' equivalence to their corresponding simpler expressions, SiMBA simplifies all MBAs exactly equivalently as their simpler expressions. Additionally it simplifies $44$ of $104$ additional nonlinear MBAs which are marked unsolvable by MBA-Solver exactly as their simpler expressions, and it outperforms MBA-Solver regarding runtime in all categories. In contrast to MBA-Solver, SiMBA does not require any information whether an MBA is polynomial or even linear, and it does not use the additional hints about linear subexpressions provided for nonpolynomial MBAs.

\begin{table}[h]
	\ifacm\else
		\centering
	\fi
	\begin{tabular}{|c|c|c|c|c|c|}
		\hline
		\multirow{2}{*}{\bfseries Category} & \multirow{2}{*}{\bfseries Total} & \multicolumn{2}{|c|}{\bfseries MBA-Solver} & \multicolumn{2}{|c|}{\bfseries SiMBA} \\
		\cline{3-6}
		 & & \bfseries Solved & \bfseries Runtime  & \bfseries Solved & \bfseries Runtime \\
		\hline
		\hline
		Linear  & $1\,008$ & $1\,008$ & $0.01546 \, s$ & $1\,008$ & $0.00250 \, s$\\
		\hline
		Polynom.  & $1\,008$ & $1\,008$ & $0.02271 \, s$ & $1\,008$ & $0.00326 \, s$\\
		\hline
		Nonpolyn.  & $899$ & $109$ & $0.07965 \, s$  & $899$ & $0.00957 \, s$\\
		\hline
	\end{tabular}
	\caption{Comparison of a simple adaption of SiMBA with MBA-Solver on the datasets of linear, polynomial and nonpolynomial MBAs provided by MBA-Solver's Github repository~\cite{gh-mba-solver}}\label{tab:teaser}
\end{table}

In contrast to the algorithm described in Section~\ref{subsec:our}, SiMBA's adaption parses an input expression into an acyclic syntax tree (AST), identifies and simplifies linear subexpressions and  applies simple additional tricks.

%

\ifacm
	\bibliographystyle{ACM-Reference-Format}
\else
	\bibliographystyle{plain}
\fi

\bibliography{refs}



\end{document}
\endinput